\documentclass[a4paper,11pt]{article}
\usepackage[margin=1in]{geometry}  

\usepackage{listings}
\usepackage{amsmath,mathtools}
\usepackage{amsthm}
\usepackage{tikz}
\usepackage{caption,subcaption}
\usepackage{array}
\usepackage{mdwmath}
\usepackage{multirow}
\usepackage{mdwtab}
\usepackage{eqparbox}
\usepackage{multicol}
\usepackage{amsfonts}
\usepackage{multirow,bigstrut,threeparttable}
\usepackage{amsthm}
\usepackage{array}
\usepackage{bbm}
\usepackage{subcaption}
\usepackage{epstopdf}
\usepackage{mdwmath}
\usepackage{mdwtab}
\usepackage{eqparbox}
\usepackage{tikz}
\usepackage{latexsym}
\usepackage{amssymb}
\usepackage{bm}
\usepackage{amssymb}
\usepackage{graphicx}
\usepackage{mathrsfs}
\usepackage{epsfig}
\usepackage{psfrag}
\usepackage{setspace}
\usepackage{tikz-cd}

\usepackage[
CJKbookmarks=true,
bookmarksnumbered=true,
bookmarksopen=true,
colorlinks=true,
citecolor=red,
linkcolor=blue,
anchorcolor=red,
urlcolor=blue
]{hyperref}
\usepackage{cleveref}
\usepackage{algorithm}
\usepackage{algorithmic}
\usepackage{stfloats}
\usepackage[
autocite    = superscript,
backend     = bibtex, 
sortcites   = true,
style       = authoryear 
]{biblatex} 
\usepackage{nameref}
\usepackage[normalem]{ulem}

\usepackage{soul}

\input xy
\xyoption{all}

\theoremstyle{plain}
\newtheorem{theorem}{Theorem}[section]
\newtheorem{proposition}{Proposition}[section]
\newtheorem{lemma}{Lemma}[section]

\crefname{theorem}{Theorem}{Theorems}
\crefname{proposition}{Proposition}{Propositions}
\crefname{corollary}{Corollary}{Corollaries}

\newtheorem{definition}{Definition}[section]
\newtheorem{remark}{Remark}[section]
\newtheorem{example}{Example}[section]

\crefname{definition}{Definition}{Definitions}
\crefname{remark}{Remark}{Remarks}

\def\EE{\mathbb{E}}

\def\PP{\mathbb{P}}

\def\RR{\mathbb{R}}

\def\calN{\mathcal{N}}

\def\calS{\mathcal{S}}

\newcommand{\ind}{\textup{ind}}

\newcommand{\trpose[1]}{\leftexp{t}{#1}}

\def\1{\mathbbm{1}}

\usepackage{booktabs}
\usepackage{adjustbox}
\usepackage{multirow}
\usepackage{listings}

\theoremstyle{plain}

\usepackage{xspace}

\definecolor{myblue}{rgb}{.8, .8, 1}
\definecolor{mathblue}{rgb}{0.2472, 0.24, 0.6} 
\definecolor{mathred}{rgb}{0.6, 0.24, 0.442893}
\definecolor{mathyellow}{rgb}{0.6, 0.547014, 0.24}

\usepackage{enumitem}

\newcommand{\FDR}{\text{FDR}}

\newcommand{\FDP}{\text{FDP}}

\newcommand{\BH}{{\text{BH}}}

\title{Simultaneous false discovery rate control in location families}

\author{
Zijun Gao\footnotemark[1]
\and
Wenjie Hu\footnotemark[2]
\and
Qingyuan Zhao\footnotemark[3]
}

\footnotetext[1]{Department of Data Science and Operations, University of Southern California, USA.}
\footnotetext[2]{Department of Biostatistics, Epidemiology \& Informatics, University of Pennsylvania, USA.}
\footnotetext[3]{Statistical Laboratory, University of Cambridge, UK.}

\begin{document}

\maketitle




\markboth{}{}





\begin{abstract}
  When testing a number of statistical hypotheses using data from
  location families, it is often useful to control the false discovery
  rate (FDR) not just for hypotheses of the null values but also of other parameter values that are deemed practically insignificant. Here we
  consider FDR as a curve indexed by the location parameter and
  suggest a simple generalization of the Benjamini-Hochberg procedure
  that controls the FDR curve below any user-specified level. As a
  corollary of our main result, we show that the standard Benjamini-Hochberg
  procedure---designed to control the FDR at the null---also provides simultaneous
  control of the whole FDR curve for free. We further demonstrate the implications of our results and some practical considerations with a numerical example.
 \end{abstract}

\paragraph{Keywords.}
Benjamini-Hochberg procedure; FDR curve;  Multiple testing

\section{Introduction}\label{sec:introduction}
The false discovery rate (FDR) has become a standard statistical error measure for
discovery in multiple hypothesis testing since the seminal work of
\cite{benjamini1995controlling}.
However, a p-value, or statistical significance, does not measure the
size of an effect or the importance of a result
\parencite{wasserstein2016asa}. In light of this, the concept of FDR may seem
restricted as it only concerns false discoveries at the null value; in
practice, it may also be desirable to have low FDR for small or
``practically insignificant'' values of the parameter. That is,
instead of just controlling the FDR at the null, it may be desirable
to control the whole ``FDR curve'', a concept we will formally
introduce later.

In this article we document a ``free lunch'' theorem that shows that the standard
Benjamini-Hochberg (BH) procedure, when applied to control the FDR at
the null, can also control the whole FDR curve at a level that
depends on the nominal FDR at the null. This observation is
based on a simple generalization of the BH procedure that can be
applied to control the whole FDR curve below any user-specified
level. Next, we will introduce the setup and the main theoretical
results before demonstrating the practical implications using a real dataset.

\section{Setup}\label{sec:setup}

Suppose we have independent, real-valued data $X_i \sim \PP_{\theta_i},
i=1,\dots,m$, from a location family $\{\PP_\theta: \theta \in
\mathbb{R}\}$, where the distribution function of $\PP_\theta$ is
given by $\PP_\theta(X \leq x) = F(x - \theta)$ and $F$ is continuous
and strictly increasing. We are interested in testing the one-sided hypotheses
\[
  H_{i,\theta}: \theta_i \geq \theta, \quad i=1,\dots,m,~\theta \in \RR.
\]
\clearpage
For example, if $\theta$ represents a treatment effect on a risk
factor, we may be only interested in discovering $\theta_i$ that is
sufficiently small (reduces the risk significantly). In this simple
setting, the p-value
for the hypothesis $H_{i,\theta}$ is given by \[P_{i,\theta} = F(X_i -
  \theta).\] So the test that rejects
$H_{i, \theta}$ when $P_{i, \theta} \leq \alpha$ has size $\alpha$.

\begin{definition}[FDR curve]
The false discovery proportion curve for a rejection set $\mathcal{S}
\subseteq \{1,\dots,m\}$ is given by
\begin{align*}
\FDP_{\mathcal{S}}(\theta)
:=
\frac{\sum_{i \in \mathcal{S}} \1_{\{\theta_i \geq \theta\}} }{\max
\{1, |\mathcal{S}|\}}, \quad \theta \in \mathbb{R}.
\end{align*}
The FDR curve of a multiple testing procedure that rejects a random
subset $\mathcal{S}(X_1,\dots,X_m)$ is then defined as
$\FDR(\theta)=\EE[\FDP_{\mathcal{S}(X_1,\dots,X_m)}(\theta)]$.
\end{definition}

To test $H_{1,0},\dots,H_{m,0}$, the standard $\BH_q$ procedure computes
\[
  i^* = \max \left\{i: P_{(i), 0} \leq i q / m, 1 \leq i \leq m\right\},
\]
where $P_{(1), 0} \leq \dotsc \leq P_{(m), 0}$ are the order
statistics of the p-values for the null hypotheses and $q$ is a
user-specified value between 0 and $1$. The $\BH_q$ procedure then
rejects all hypotheses with p-values less than or equal to
$P_{(i^*)}$:
\[
  \mathcal{S}_{\BH_q}(X_1,\dots,X_m) = \{i: P_{i,0} \leq P_{(i^*)}\}.
\]
It is not difficult to see that this is the largest subset
$\mathcal{S} \subseteq \{1,\dots,m\}$ such that $i \in \mathcal{S}$ if
and only if $P_{i,0} \leq |\mathcal{S}|q/m$.
It has been shown that this standard $\BH_q$ procedure controls the FDR at
the null in the sense
that $\FDR(0) \leq q$, or equivalently, controls the FDR
curve below a simple step function:
\begin{equation} \label{eq:q-bh}
  \FDR(\theta) \leq q_{\BH}(\theta) := q \cdot \1_{\{\theta \ge 0\}} +
  \1_{\{\theta < 0\}}, \quad \theta \in \RR.
\end{equation}
See \cite{benjamini2001control,benjamini2005false}, \cite{storey2004strong}, and \cite{ferreira2006} for different
ways to prove this celebrated result.

\section{Main Results}

Our main methodological contribution is a generalized $\BH_q$
procedure for FDR curve control. Instead of considering $q \in \RR$ as a nominal FDR
level at the null, let $q: \mathbb{R} \to [0,1]$ be a user-specified nominal
FDR curve.
For example, to control the FDR at a finite number of locations specified by $(\theta_j, q_j), j=1,\dots, l$, it suffices to use the step function
\[
q(\theta) := \inf_{\theta_j \le \theta} q_j, \quad \theta \in \mathbb{R}, \quad \text{and~} \inf_{j \in \emptyset} q_j = 1.
\]
Without loss of generality, we will assume $q$ is
non-increasing---the target FDR can only become smaller as $\theta$
increases, as $H_{i,\theta}$ is stronger than $H_{i,\theta'}$ when $\theta > \theta'$ in our setup.

Consider the following generalized $\BH_q$ procedure:
\begin{enumerate}
    \item Compute the following target FDR curve-normalized ``p-values'' (can be larger than $1$):
\begin{align}\label{eq:p.value.FDR.curve}
   \bar{P}_{i} = \sup_{\theta \in \RR} P_{i,\theta}/q(\theta),
  \quad i=1,\dots,m.
 \end{align}
 \item Apply the original $\BH_q$ procedure to
$\bar{P}_{1},\dots,\bar{P}_m$ with $q = 1$.
\end{enumerate}
In other words, the discovery set $\mathcal{S}_{\BH_q}(X_1,\dots,X_m)$ found by the generalized $\BH_q$ procedure is the largest subset
 $\mathcal{S} \subseteq \{1,\dots,m\}$ such that $i \in \mathcal{S}$ if
 and only if $\bar{P}_i \leq |\mathcal{S}|/m$.
It is easy to see that this reduces to the original $\BH_q$
procedure when the step function $q_{\BH}$ in \eqref{eq:q-bh} is
used as the target FDR curve. For this reason, we will still refer to
this generalization as the $\BH_q$ procedure with the
understanding that $q$ can now be a real-valued function.

Before stating our main result, let us  introduce a modification to the
user-specified curve $q$:
\begin{equation}
  \label{eq:q-star}
  q^*(\theta) = \inf_{\theta' \in \RR}
    \sup_{a \leq x - \theta' \leq b} q(\theta') \cdot
    \frac{F(x - \theta)}{F(x - \theta')}, \quad \text{where}~a =
    F^{-1}(q(\theta')/m) ~\text{and}~b = F^{-1}(q(\theta')).
\end{equation}
Throughout, $F^{-1}(u)=\inf\{x:F(x)\ge u\}$ denotes the generalized quantile,
with $F^{-1}(1)=+\infty$. It is easy to see that $q^*(\theta) \leq q(\theta)$ is always true by
taking $\theta' =
\theta$ in the definition above.

\begin{remark}
    The choice of $(a,b)$ ensures that $q(\theta') / m \leq F(x
- \theta') \leq q(\theta')$, so the supremum in \eqref{eq:q-star} is taken over all values
of $x$ that have a p-value against $\theta'$ in $[q(\theta')/m,
q(\theta')]$, the maximum range of rejection threshold for the standard
$\BH_{q(\theta')}$ procedure.
If instead we simply use $a=-\infty$ and $b=\infty$, it will typically be the case that $q^*(\theta) = q(\theta)$ for all $\theta$. (The ratio ${F(x - \theta)}/{F(x - \theta')}$ typically converges to $1$ as $x \to \infty$ for $\theta>\theta'$ and to $\infty$ as $x \to -\infty$ for $\theta<\theta'$.)
\end{remark}

\begin{theorem} \label{thm:main}
  The generalized $\BH_{q}$ procedure simultaneously controls
  the FDR curve in the following sense
  \begin{equation} \label{eq:thm:main}
    \sup_{\theta \in \RR} \frac{\FDR(\theta)}{q^*(\theta)} \le
    \EE\left[\sup_{\theta \in \RR}
      \frac{\FDP(\theta)}{q^*(\theta)}\right] \le 1.
  \end{equation}
  So at any location $\theta$ (possibly post hoc) we have
  $\FDR(\theta) \leq q^{*}(\theta)$ (almost surely).
\end{theorem}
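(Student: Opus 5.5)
The first inequality in \eqref{eq:thm:main} is immediate, since for each fixed $\theta$ we have $\FDR(\theta)/q^*(\theta) = \EE[\FDP(\theta)/q^*(\theta)] \le \EE[\sup_{\theta'} \FDP(\theta')/q^*(\theta')]$. For the second inequality I would bound the random supremum pathwise by a weighted count of rejections, with one weight per hypothesis evaluated at its own true parameter $\theta_i$. I would then control the expectation of that count by the usual leave-one-out argument for BH.

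\emph{Step 1 (monotonicity of $q^*$ and a pathwise bound).} I first check that $q^*$ is non-increasing. If $\theta \le \tilde\theta$, then $F(x-\theta) \ge F(x-\tilde\theta)$ for every $x$, so each ratio in \eqref{eq:q-star} is at least as large at $\theta$ as at $\tilde\theta$. The ranges $[a,b]$ depend only on $\theta'$, so taking $\sup$ and then $\inf$ preserves the inequality and gives $q^*(\theta) \ge q^*(\tilde\theta)$. Now let $\mathcal S$ be the rejection set and $R = \max\{1,|\mathcal S|\}$. For any $\theta$, every $i$ counted in $\FDP(\theta)$ has $\theta_i \ge \theta$, hence $1/q^*(\theta) \le 1/q^*(\theta_i)$. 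This gives
\[
\sup_{\theta}\frac{\FDP(\theta)}{q^*(\theta)} \le \sum_{i=1}^m \frac{\1_{\{i\in\mathcal S\}}}{R\, q^*(\theta_i)} .
\]
It therefore suffices to show $\EE[\1_{\{i\in\mathcal S\}}/R] \le q^*(\theta_i)/m$ for each $i$, since summing over $i$ then gives $1$. The degenerate case $q^*(\theta_i)=0$ needs the convention $0/0=0$ and the stronger statement that $i$ is rejected with probability zero, which the bound below also delivers.

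\emph{Step 2 (leave-one-out).} As in the standard BH proofs, let $R_i$ be the number of rejections when the generalized $\BH_q$ procedure is run with $\bar P_i$ replaced by $0$. Then $R_i$ is a function of $X_{-i}$ only, and by the self-consistency characterization of $\mathcal S$ we have $i \in \mathcal S$ iff $\bar P_i \le R_i/m$, in which case $|\mathcal S| = R_i$. Hence
\[
\EE\!\left[\frac{\1_{\{i\in\mathcal S\}}}{R}\right] = \EE\!\left[\frac{1}{R_i}\,\PP\!\left(\bar P_i \le \tfrac{R_i}{m} \,\middle|\, X_{-i}\right)\right].
\]
By independence it is enough to show that for every $k\in\{1,\dots,m\}$,
\[
\PP_{\theta_i}\!\left(\bar P_i \le k/m\right) \le q^*(\theta_i)\,k/m .
\]

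\emph{Step 3 (the key single-coordinate bound; the main step).} Fix any $\theta'$. The event $\bar P_i \le k/m$ implies $F(X_i - \theta') \le q(\theta')k/m$, that is, $X_i \le c+\theta'$ where $c = F^{-1}(q(\theta')k/m)$. Since $1\le k\le m$, monotonicity of $F^{-1}$ gives $a \le c \le b$ with $a,b$ as in \eqref{eq:q-star}, and continuity gives $F(c) = q(\theta')k/m$. Under $\theta_i$, setting $x=c+\theta'$,
\[
\PP_{\theta_i}(X_i \le x) = F(x-\theta_i) = q(\theta')\frac{k}{m}\cdot\frac{F(x-\theta_i)}{F(x-\theta')} \le \frac{k}{m}\sup_{a\le x-\theta'\le b} q(\theta')\frac{F(x-\theta_i)}{F(x-\theta')} .
\]
Taking the infimum over $\theta'$ yields the bound $q^*(\theta_i)k/m$. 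This is exactly where the truncation to $[a,b]$ in \eqref{eq:q-star} is used. I expect the only delicate points to be boundary cases: $q(\theta')=1$ (so $b=\infty$) and $q(\theta')=0$ (where the event is empty or has zero probability). The main care is in verifying that $c$ always lies in $[a,b]$.

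Plugging Step 3 into Step 2 gives $\EE[\1_{\{i\in\mathcal S\}}/R] \le \EE[(1/R_i)\, q^*(\theta_i) R_i/m] = q^*(\theta_i)/m$, and summing over $i$ with Step 1 completes the proof.
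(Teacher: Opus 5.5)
Your proof is correct, but it brings $q^*$ into the argument at a different point than the paper does.

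The paper proceeds in two stages. It first shows $\EE[\sup_\theta \FDP(\theta)/q(\theta)]\le 1$ for an arbitrary non-increasing target; there the leave-one-out step only uses that $P_{i,\theta_i}$ is exactly uniform. It then proves a deterministic invariance lemma: on every realization, the generalized procedure with target $q^*$ rejects exactly the same set as with target $q$. (For $i\in\mathcal S_q$ it bounds $\bar P_i^*\le|\mathcal S_q|/m$ by evaluating the inner supremum of \eqref{eq:q-star} at $x=\max\{X_i,a+\theta'\}$.) Applying the first result to the target $q^*$ gives the theorem.

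You never compare rejection sets. You run the leave-one-out argument once, with weights $1/q^*(\theta_i)$. You feed it the grid super-uniformity bound $\PP_{\theta_i}(\bar P_i\le k/m)\le q^*(\theta_i)k/m$, $k=1,\dots,m$, obtained by evaluating the same supremum at the BH threshold $x-\theta'=F^{-1}(q(\theta')k/m)$.

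Your route has three advantages:
\begin{itemize}
\item it is shorter;
\item it makes transparent that the truncation to $[a,b]$ is exactly the range of possible thresholds $q(\theta')k/m$;
\item it includes the check that $q^*$ is non-increasing, which the paper's application of its first step to $q^*$ needs but does not state.
\end{itemize}
The paper's route buys the structural by-product $\mathcal S_q=\mathcal S_{q^*}$: tightening the target from $q$ to $q^*$ costs no discoveries, so $q^*$ is the curve the procedure is effectively targeting.

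The boundary case you flagged ($q(\theta')=1$, $k=m$, so $c=+\infty$) is harmless. The event has probability one, and the supremum over the unbounded range $x-\theta'\ge a$ is at least $\lim_{x\to\infty}F(x-\theta_i)/F(x-\theta')=1$.
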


\Cref{thm:main} has two important practical implications:
\begin{enumerate}
\item the standard
$\BH_q$ procedure (applied to null p-values only) may offer FDR control
at other locations for free,
\item and the generalized $\BH_q$ procedure
may provide FDR curve control that is stronger than the user-specified
curve $q$.
\end{enumerate}
The first implication is really just a special case of the second if
one takes the
viewpoint that the standard $\BH_q$ procedure corresponds to
using a simple step function $q_{\BH}$ in \eqref{eq:q-bh} as the
target FDR curve. In this case, it can be shown that
$q_{\BH}^{*}(0) = q_{\BH}(0)$; see the remark after
\Cref{lemm:minimizer} in the Appendix. Geometrically, this means
that $q^{*}_{\BH}$ is below $q_{\BH}$
but the two functions touch at $\theta = 0$; see
\Cref{fig:qcLowerBound} below for an illustration.

The following result shows that the observation above is not a coincidence: under some mild conditions, $q^*$ and
$q$ always touch at least one point.
We say the location family $F(x - \theta)$ satisfies the \emph{monotone ratio property} if
\begin{align}\label{cond:monotone.ratio}
    \text{${F(x-\theta')}/{F(x-\theta)}$
is monotone in $x$ for any $\theta$, $\theta' \in \mathbb{R}$.}
\end{align}
This property ensures that the supremum in \eqref{eq:q-star} is attained at one of the endpoints, that is $x-\theta'=a$ or $x-\theta'=b$.
It holds for many common location families, including the Gaussian location model (i.e.\ $F = \Phi$ is the CDF of the standard normal distribution); more generally, it holds whenever $F$ admits a density $f$ such that the ratio $f(x)/F(x)$ is monotone in $x$.

\begin{proposition}\label{prop:touching.point}
Suppose $F$ is continuous, strictly
increasing, and has the monotone ratio
property~\eqref{cond:monotone.ratio}. If $q$ is decreasing,
right-continuous, bounded away from zero, and $q(\underline{\theta}) =
1$ for some $\underline{\theta}>-\infty$, then there exists $\theta^*
\in \RR$
such that $q^*(\theta^*)=q(\theta^*)$.
\end{proposition}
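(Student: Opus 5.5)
The plan is a contradiction argument built on the decomposition $q^*(\theta)=\inf_{\theta'} G(\theta,\theta')$, where
\[
G(\theta,\theta') := \sup_{a(\theta')\le x-\theta'\le b(\theta')} q(\theta')\frac{F(x-\theta)}{F(x-\theta')},
\]
with $G(\theta,\theta)=q(\theta)$. A point $\theta$ fails to be a touching point exactly when it has a \emph{witness} $\theta'\neq\theta$ with $G(\theta,\theta')<q(\theta)$. By the monotone ratio property \eqref{cond:monotone.ratio}, the supremum in $G$ is attained at an endpoint. This gives the explicit formula
\[
G(\theta,\theta')=\max\Bigl\{\,m\,F\bigl(a(\theta')+\theta'-\theta\bigr),\; F\bigl(b(\theta')+\theta'-\theta\bigr)\Bigr\},
\]
where $a(\theta')=F^{-1}(q(\theta')/m)$ and $b(\theta')=F^{-1}(q(\theta'))$. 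When $q(\theta')=1$ we read the second term as the limit $1$ as $x\to\infty$. All later steps use only this formula.

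First I will show that witnesses can never come from the region $\{q=1\}$. If $q(\theta')=1$, then $b=\infty$ and the ratio $F(x-\theta)/F(x-\theta')\to 1$ as $x\to\infty$. Hence $G(\theta,\theta')\ge 1\ge q(\theta)$ for every $\theta$. In particular, any $\theta\le\underline\theta$ has $q(\theta)=1$, since $q$ is non-increasing. All of its witnesses must lie to the right, in $\{q<1\}$.

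Second, I will show that witnesses are confined to a bounded distance. Let $c:=\inf q>0$. Then $a(\theta')\ge F^{-1}(c/m)=:a_0$, so for $\theta'-\theta\ge\Delta$ we get $G(\theta,\theta')\ge mF(a_0+\Delta)$. Choose $\Delta$ with $F(a_0+\Delta)\ge 1/m$; then $G\ge 1$, so any witness satisfies $\theta'<\theta+\Delta$. In the other direction, take $\theta'<\theta$ with $\theta$ very large. Then $q(\theta)\to\inf q=c$, while $F(b(\theta')+\theta'-\theta)$ is bounded below by roughly $q(\theta')\ge c$ on nearby points. For this direction I will use the endpoint terms directly rather than a uniform bound.

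Third, I will run a sup-argument. Let $L:=\{\theta : \theta \text{ has a witness } \theta'>\theta\}$; it contains $(-\infty,\underline\theta]$ if no touching point exists. Set $\theta^*:=\sup L$, assuming $L$ is bounded above. I will then check that $\theta^*$ itself cannot have a witness on either side, which gives the contradiction.

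\begin{itemize}
\item For a right witness $\theta'>\theta^*$: the map $\theta\mapsto G(\theta,\theta')$ is continuous because $F$ is continuous, and $q$ is right-continuous. So the strict inequality $G(\theta,\theta')<q(\theta)$ persists for $\theta$ slightly larger than $\theta^*$. This contradicts the maximality of $\theta^*$.
\item For a left witness $\theta'<\theta^*$: I take a sequence $\theta_n\uparrow\theta^*$ in $L$ with right witnesses $\theta_n'\in(\theta_n,\theta_n+\Delta)$. I extract a convergent subsequence and compare $G(\theta^*,\theta')$ with $G(\theta_n,\theta'_n)$, using monotonicity of $q$ and of $F$ to chain the inequalities.
\end{itemize}

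The main obstacle is exactly this limiting step. The function $q$ is only right-continuous and $F^{-1}$ may jump, so $G$ is only semicontinuous in $\theta'$. I expect to handle this by working with the left limits $q(\theta'-)$, applying the endpoint formula, and exploiting that $G$ is monotone in $\theta$ for fixed $\theta'$ (non-increasing, since $F(\cdot-\theta)$ decreases in $\theta$). If $L$ is instead unbounded, I will fall back on the bounded-distance step together with $q\to c$ to derive a contradiction at large $\theta$.
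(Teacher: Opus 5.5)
Your third step has a genuine gap: the claim that $\theta^*=\sup L$ has no \emph{left} witness cannot be obtained the way you sketch it. The right-witness half is fine. The left-witness half fails because witnessing is not a local relation. By your endpoint formula, $\theta'$ witnesses $\theta$ iff $A(\theta')<A(\theta)$ and $B(\theta')<B(\theta)$, where $A(\theta)=\theta+F^{-1}(q(\theta)/m)$ and $B(\theta)=\theta+F^{-1}(q(\theta))$.

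Take $F=\Phi$, $m=10$, and $q=1$ on $(-\infty,0)$, $q=0.1$ on $[0,1)$, $q=0.05$ on $[1,\infty)$.
\begin{itemize}
\item Every $\theta\in(0.76,1)$ is witnessed by $\theta'=1$.
\item No $\theta\ge 1$ has a right witness, since $A$ and $B$ both increase on $[1,\infty)$. Hence $\sup L=1$.
\item Yet $\theta'=0$ witnesses $1$: $A(0)\approx-2.33<A(1)\approx-1.58$ and $B(0)\approx-1.28<B(1)\approx-0.64$. Indeed $q^*(1)\le G(1,0)\approx 0.011<0.05$.
\end{itemize}
All the local data you plan to use are present here: $\theta_n\uparrow\theta^*$ with right witnesses $\theta_n'$, and monotonicity of $q$ and $F$. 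So no limiting comparison of $G(\theta_n,\theta_n')$ with $G(\theta^*,\theta')$ can produce a contradiction.

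The fallback for unbounded $L$ fails for a similar reason. For any fixed $\theta'$ with $q(\theta')<1$, $G(\theta,\theta')\to 0$ as $\theta\to\infty$, so every large $\theta$ has a left witness and nothing contradictory happens there. Relatedly, left witnesses are not confined to bounded distance. For $\theta'<\theta$ one has $F(b(\theta')+\theta'-\theta)<q(\theta')$, an upper bound rather than the lower bound suggested in your second step.

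Your first two steps, however, already finish the proof once you combine them.
\begin{itemize}
\item A witness of $\theta$ must have $q(\theta')<1$, hence $\theta'>\underline\theta$.
\item It must also satisfy $\theta'<\theta+\Delta$.
\item For $\theta\le\underline\theta-\Delta$ these are incompatible. So $G(\theta,\theta')\ge 1=q(\theta)$ for every $\theta'$, and $q^*(\theta)=q(\theta)=1$.
\end{itemize}
This uses only the endpoint $x-\theta'=a$ and the limit $x\to\infty$, not the monotone ratio property. It proves the statement as literally written, but only with a touching point in the trivial region where $q=q^*=1$.

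The paper's proof delivers the informative touching point with $q(\theta^*)<1$ via a global potential. It minimizes $S(\theta)=A(\theta)+B(\theta)$, which is finite only where $q<1$, lower semicontinuous, bounded below because $q$ is bounded away from zero, and tends to $+\infty$. A dominating point must lower $A$ and $B$ weakly, with at least one strict decrease (Lemma~\ref{lemm:equivalence}; your formula even gives strict decrease in both). Hence the minimizer of $S$ cannot be witnessed. If you want that stronger conclusion, replace the sup-of-$L$ scheme by such a potential: a supremum over a one-sided witness relation cannot detect the Pareto structure of $(A,B)$.
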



\Cref{prop:touching.point} shows that, even for general $q$ functions, it is impossible that $q^{*}(\theta) < q(\theta)$ for
\emph{all} $\theta \in \RR$, that is, the actual FDR control of the
generalized $\BH_q$ procedure cannot be uniformly better than the
nominal FDR control. See \Cref{fig:qcLowerBound,fig:qcBreastCancerSNR}
for some examples.

An interesting question is how tight the inequalities in
\eqref{eq:thm:main} are. The following lower bound of the worst-case
FDR curve can be obtained by setting $\theta_1 = \dots = \theta_m =
\theta$.

\begin{proposition}[Lower bound]\label{prop:lower.bound}
Suppose $F$ is continuous, strictly increasing, and has the monotone ratio
   property~\eqref{cond:monotone.ratio}. Let $q^*$ be the transformed
   FDR curve for the standard $\BH_q$ procedure obtained by applying
   \eqref{eq:q-star} to \eqref{eq:q-bh}. Then
   \[
     \sup_{\theta_1,\dots,\theta_m \in \RR} \FDR(\theta) \geq 1 - e^{-q^*(\theta)} \quad \text{for all}~\theta < 0.
   \]
\end{proposition}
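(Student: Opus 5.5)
The plan is to take the configuration suggested in the text, $\theta_1=\dots=\theta_m=\theta$ with $\theta<0$. This reduces $\FDR(\theta)$ to the probability of making at least one rejection, which we then compare with $q^*(\theta)$ by evaluating the infimum in \eqref{eq:q-star} at the single choice $\theta'=0$.

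\emph{Step 1 (reduction).} With all $\theta_i=\theta$, every hypothesis $H_{i,\theta}$ is true, since $\theta_i\ge\theta$. So $\FDP_{\mathcal S}(\theta)=\1_{\{|\mathcal S|\ge 1\}}$, and $\FDR(\theta)=\PP(\text{the standard }\BH_q\text{ makes at least one rejection})$. The $\BH_q$ procedure rejects whenever $P_{(1),0}\le q/m$. Hence
\[
\FDR(\theta)\ \ge\ \PP\Bigl(\min_i P_{i,0}\le q/m\Bigr)=1-(1-p)^m\ \ge\ 1-e^{-mp},
\]
where $p=\PP_\theta(F(X)\le q/m)=\PP_\theta(X\le a)=F(a-\theta)$ and $a=F^{-1}(q/m)$. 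Here we use independence and continuity of $F$, and the elementary inequality $1-p\le e^{-p}$.

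\emph{Step 2 (bounding $q^*$).} It remains to show $q^*(\theta)\le mF(a-\theta)$. The map $t\mapsto 1-e^{-t}$ is increasing, so this suffices. We take $\theta'=0$ in \eqref{eq:q-star}, where $q_{\BH}(0)=q$, and obtain
\[
q^*(\theta)\le \sup_{a\le x\le b} q\,\frac{F(x-\theta)}{F(x)},\qquad b=F^{-1}(q).
\]
We need this supremum to be attained at $x=a$. There, because $F(a)=q/m$ by continuity, it equals $q\,F(a-\theta)/(q/m)=mF(a-\theta)$, which is exactly what is needed.

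\emph{Step 3 (direction of monotonicity).} This is the only delicate point. The monotone ratio property \eqref{cond:monotone.ratio} says that $R(x)=F(x-\theta)/F(x)$ is monotone, but we need it to be non-increasing. Since $\theta<0$ and $F$ is strictly increasing, $R(x)>1$ for all $x$. Moreover $R(x)\to 1$ as $x\to\infty$, since both $F(x-\theta)$ and $F(x)$ tend to $1$. A non-decreasing function exceeding $1$ everywhere cannot tend to $1$, so $R$ must be non-increasing. Its supremum over $[a,b]$ is therefore $R(a)$, which completes the bound $q^*(\theta)\le mF(a-\theta)$ and with it the proposition.
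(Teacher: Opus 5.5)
Your proof is correct and follows the paper's argument. It uses the same all-null configuration $\theta_1=\dots=\theta_m=\theta$, the monotone ratio property at $\theta'=0$ to place the supremum in \eqref{eq:q-star} at $x=F^{-1}(q/m)$, and the bound $(1-p)^m\le e^{-mp}$. Your version is also slightly more careful than the paper's in three places: you write $\FDR(\theta)\ge \PP(\min_i P_{i,0}\le q/m)$ as an inequality rather than an equality (BH can reject even when $P_{(1),0}>q/m$), you need only the upper bound $q^*(\theta)\le mF(a-\theta)$ (which sidesteps the truncation at $1$ that the paper's identity $F(t-\theta)=q^*(\theta)/m$ glosses over), and you justify why the ratio is non-increasing rather than merely monotone.
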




When $\theta < 0$ and $q^*(\theta)$ is small, the lower bound $1 -
e^{-q^*(\theta)}$ above is close to $q^*(\theta)$, indicating that our
analysis is not too loose. This is further confirmed by the example below.

\begin{example}
Consider the Gaussian location family $X_i \sim \calN(\theta_i,1)$ so $F = \Phi$.
Figure~\ref{fig:qcLowerBound} shows that $q^*$ touches $q_{\BH}$ at $\theta = 0$, and is substantially smaller than $q_{\BH}$ for all $\theta \neq 0$ displayed, suggesting that the BH procedure can provide much stronger control of the FDR curve than the simple step function $q_{\BH}$.
We further add to the Figure a Monte Carlo estimate of the FDR
curve when $\theta_1 = \dots = \theta_m = \theta$ (blue dashed), which
lower bounds the worst-case FDR curve over all possible configurations
of $(\theta_1,\dots,\theta_m)$.
\end{example}

\begin{figure}[btp]
    \centering
    \begin{subfigure}[t]{0.32\linewidth}
        \centering
        \includegraphics[width=\linewidth,trim=0 0.45cm 0 0,clip]{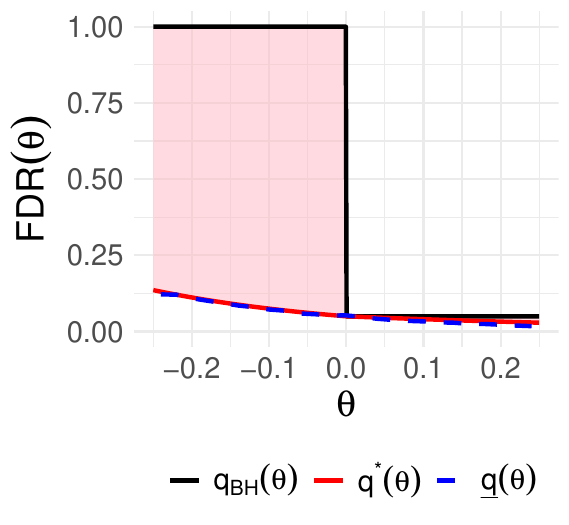}
        \caption*{\qquad\qquad (a) $q(0)=0.05$}
    \end{subfigure}
    \begin{subfigure}[t]{0.32\linewidth}
        \centering
        \includegraphics[width=\linewidth,trim=0 0.45cm 0 0,clip]{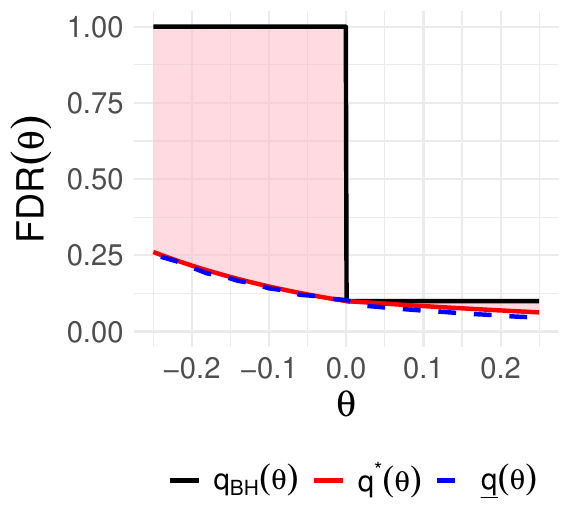}
        \caption*{\qquad \qquad (b) $q(0)=0.1$}
    \end{subfigure}
    \begin{subfigure}[t]{0.32\linewidth}
        \centering
        \includegraphics[width=\linewidth,trim=0 0.45cm 0 0,clip]{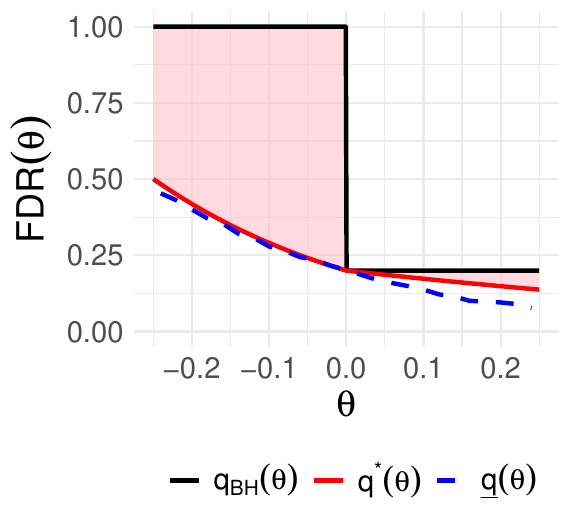}
        \caption*{\qquad \qquad (c) $q(0)=0.2$}
    \end{subfigure}
    \caption{FDR curve control for Gaussian location model. The curve
    $\underline{q}(\theta)$ is a Monte Carlo estimate of the FDR curve
    when $\theta_1 = \dots = \theta_m = \theta$.}
    \label{fig:qcLowerBound}
\end{figure}


\section{Generalizations beyond a Single Location Family and a Real Data Example}
\label{sec:numerical}

In many applications, data for different hypotheses might not be generated from a single location family, but many results obtained above can be suitably generalized.
Here we consider the problem with independent $X_1,\dots,X_m$ that satisfy
$\PP_{i,\theta_i}(X_i \le x)=F_i(x-\theta_i)$, $i=1,\dots,m$, where each $F_i$ is continuous and strictly increasing.
The p-value for testing $H_{i,\theta}:\theta_i \ge \theta$ is then $P_{i,\theta}=F_i(X_i-\theta)$,
and the FDR-curve normalized p-value $\bar{P}_i$ remains as in \eqref{eq:p.value.FDR.curve}.
We update the definition of $q^*$ in \eqref{eq:q-star} as
\begin{equation}
\label{eq:q-star.heterogeneous.CDF}
q^*(\theta)
:=
\inf_{\theta' \in \mathbb R}
\sup_{i\in[m]}
\sup_{a_i \le x-\theta' \le b_i}
q(\theta') \cdot \frac{F_i(x-\theta)}{F_i(x-\theta')},
\quad
a_i = F_i^{-1}\!\left({q(\theta')/m}\right),
~~
b_i = F_i^{-1}\left(q(\theta')\right),
\end{equation}
where we further take the supremum over $F_i$ and $[m] = \{1,\dots,m\}$.
When $F_i = F$ for all $i$, this reduces to \eqref{eq:q-star}.
As before, $q^*(\theta)\le q(\theta)$ by taking $\theta'=\theta$.


The FDR control result in \Cref{thm:main} continues to hold in this setting; see \Cref{appendix:proofs} for the proof.
However, \Cref{prop:touching.point} may not hold because of the additional supremum.
Nevertheless,  \Cref{prop:touching.point.heterogeneous.CDF} below shows that when $q$ is piecewise constant, there always exists a subset of the original constraints whose induced $q^*$ curve still controls all prescribed FDR constraints and touches the target $q$ curve at least once. We will further discuss this using a numerical example; see \Cref{fig:qcBreastCancerEffectSize}.

\begin{proposition}\label{prop:touching.point.heterogeneous.CDF}
Suppose $q$ is piecewise constant (let $\Theta\subset\mathbb R$ denote the set of jump points), decreasing, and right-continuous.
For any subset $\Theta' \subseteq \Theta$, let $q^*_{\Theta'}$ denote the curve obtained from \eqref{eq:q-star.heterogeneous.CDF} with the infimum restricted to $\theta' \in \Theta'$. Then there exists a nonempty subset $\Theta' \subseteq \Theta$ such that
$q^*_{\Theta'}(\theta) \le q(\theta)$ for all $\theta \in \RR$ and $q^*_{\Theta'}(\theta')=q(\theta')$ for some $\theta' \in \Theta'$.
\end{proposition}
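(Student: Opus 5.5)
The plan is to replace the condition ``$q^*_{\Theta'}\le q$ everywhere'' by finitely many conditions at the jump points, and then to choose $\Theta'$ greedily from right to left. I assume $\Theta=\{\vartheta_1<\dots<\vartheta_K\}$ is finite. For $\theta'\in\Theta$, write
\[
g_{\theta'}(\theta):=\sup_{i\in[m]}\sup_{a_i\le x-\theta'\le b_i} q(\theta')\,\frac{F_i(x-\theta)}{F_i(x-\theta')},
\]
so that $q^*_{\Theta'}(\theta)=\min_{\theta'\in\Theta'}g_{\theta'}(\theta)$. Two elementary facts follow because each $F_i$ is increasing. First, $g_{\theta'}(\theta')=q(\theta')$. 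Second, $g_{\theta'}(\theta)$ is non-increasing in $\theta$.

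\textbf{Step 1 (reduction to jump points).} Take $\theta\in[\vartheta_j,\vartheta_{j+1})$, with $\vartheta_{K+1}=+\infty$. By right-continuity and piecewise constancy, $q(\theta)=q(\vartheta_j)$. By monotonicity of $g_{\theta'}$, $g_{\theta'}(\theta)\le g_{\theta'}(\vartheta_j)$. Hence $q^*_{\Theta'}\le q$ on $[\vartheta_1,\infty)$ as soon as, for every $j$, some $\theta'\in\Theta'$ ``covers'' $\vartheta_j$, meaning $g_{\theta'}(\vartheta_j)\le q(\vartheta_j)$. Every $\vartheta_j$ that lies in $\Theta'$ covers itself.

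The region $\theta<\vartheta_1$ needs separate treatment, and this is the step I expect to be the main obstacle. There $q$ equals its left-most value, presumably $1$, while $g_{\theta'}(\theta)$ can a priori be as large as $m$. I would handle it by noting that an FDR bound of $1$ is vacuous, so $q^*$ may be read as $\min(q^*,1)$ there. If that convention is not acceptable, this is the point where an extra argument, or an extra hypothesis, is needed.

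\textbf{Step 2 (greedy construction).} Process the jump points from right to left. Put $\vartheta_K$ into $\Theta'$. For $j=K-1,\dots,1$, add $\vartheta_j$ to $\Theta'$ exactly when it is not covered by the elements already chosen, all of which are larger than $\vartheta_j$. When the loop ends, every jump point is covered: either it was added, in which case it covers itself, or it was already covered by earlier choices. By Step 1, $q^*_{\Theta'}\le q$.

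\textbf{Step 3 (touching).} Let $\theta_s=\min\Theta'$; this is nonempty because $\vartheta_K\in\Theta'$. If $\theta_s=\vartheta_K$, then $\Theta'=\{\vartheta_K\}$ and $q^*_{\Theta'}(\vartheta_K)=g_{\vartheta_K}(\vartheta_K)=q(\vartheta_K)$. Otherwise $\theta_s$ was added because $g_{\theta''}(\theta_s)>q(\theta_s)$ for every $\theta''\in\Theta'$ with $\theta''>\theta_s$. No element of $\Theta'$ lies below $\theta_s$, so
\[
q^*_{\Theta'}(\theta_s)=\min\Bigl\{q(\theta_s),\,\min_{\theta''\in\Theta',\,\theta''>\theta_s}g_{\theta''}(\theta_s)\Bigr\}=q(\theta_s),
\]
which gives the touching point $\theta_s\in\Theta'$.
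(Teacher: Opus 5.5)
Your greedy construction is the paper's proof. The paper inducts on $|\Theta|$, adds the new jump point $\theta_{K+1}$ exactly when $q^*_{\Theta'_K}(\theta_{K+1})>q(\theta_{K+1})$, and gets the touching point from $q^*_{\Theta'}(\theta_{K+1})=\min\{q^*_{\Theta'_K}(\theta_{K+1}),q(\theta_{K+1})\}$; that is your Step 3 for the last-added point, your right-to-left order is just one admissible order, and your Step 1 spells out the reduction from all $\theta$ to the jump points (via monotonicity of $g_{\theta'}$ and right-continuity) that the paper leaves implicit. Your worry about $\theta<\vartheta_1$ is justified rather than a flaw in your argument: with the infimum literally restricted to $\Theta'$, $g_{\theta'}(\theta)\ge mF_i(a_i+\theta'-\theta)\to m>1$ as $\theta\to-\infty$ (for $m\ge 2$), so the claim holds only under your $\min\{\cdot,1\}$ reading. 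That reading is exactly what \eqref{eq:q-star.heterogeneous.CDF} returns for the step curve generated by the constraints in $\Theta'$ (cf.\ Lemma~\ref{lemm:minimizer}), and the paper's proof, which checks only the constraints at the jump points, tacitly uses it too.
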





We provide a numerical illustration comparing formulating the FDR curve as a function of effect size (different $F_i$) vs.\
signal-to-noise ratio (shared $F_i = F$) using the benchmark breast cancer dataset in \cite{Storey2003}.
This dataset contains expression measurements for $m=3,170$ genes from hereditary breast tumors with BRCA1 and BRCA2 mutations.
Suppose we are interested in finding genes with substantially higher BRCA2 expression  than BRCA1 expression. We assume that the difference in mean log  expression levels (base 2) for gene $i = 1,\dots,m$ satisfies $X_i \sim \mathcal{N}(\mu_i, \sigma_i^2)$. Consider two sets of hypotheses below.
\begin{enumerate}
    \item Effect sizes: \(H_{i,\mu}: \mu_i \geq\mu\), using the test statistic $X_i$ and \(F_i(x)=\Phi(x/\hat{\sigma}_i)\), which varies across genes through the estimated standard deviation \(\hat{\sigma}_i\).


\item Signal-to-noise ratio: \(H_{i,\theta}: \theta_i = \mu_i/\sigma_i \geq \theta\), using test statistic $X_i/\hat{\sigma}_i$ and a shared \(F(x)=\Phi(x)\).
\end{enumerate}

\begin{figure}[tbp]
    \centering
    \includegraphics[width=0.24\linewidth]{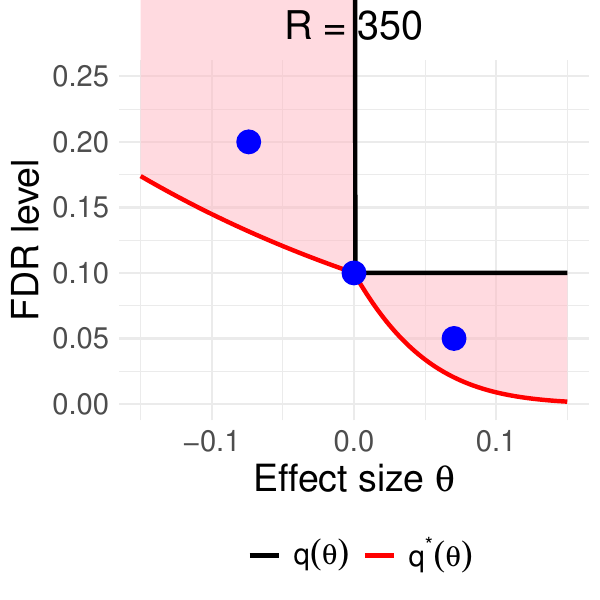}
    \includegraphics[width=0.24\linewidth]{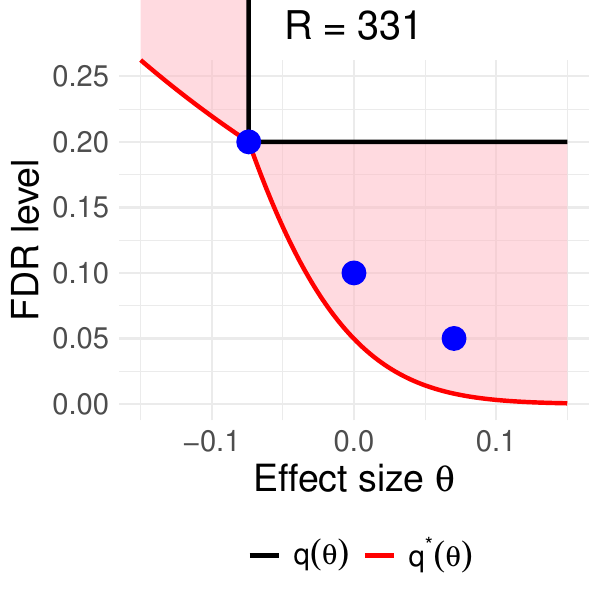}
    \includegraphics[width=0.24\linewidth]{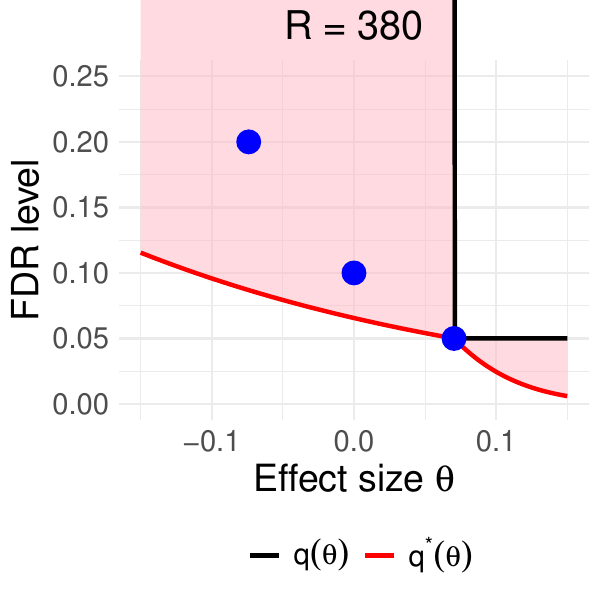}
    \includegraphics[width=0.24\linewidth]{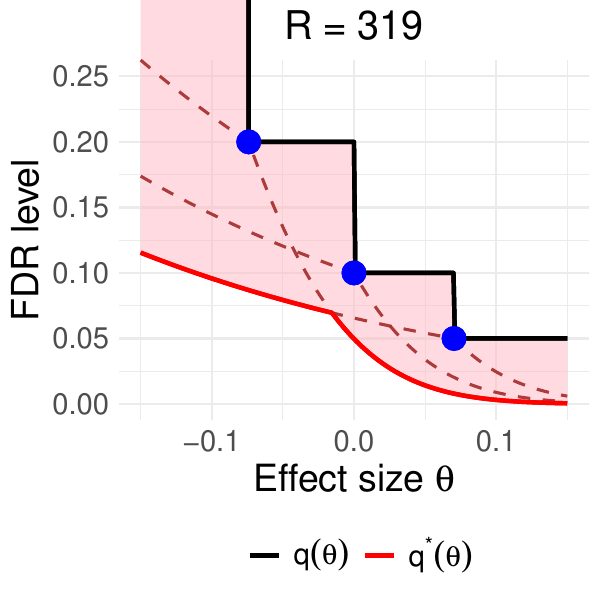}

  \caption{FDR curve control for breast cancer dataset testing the effect size with hypotheses \(H_{i,\mu}: \mu_i  \geq \mu\) with different $F_i$. $R$ denotes the number of rejections produced by the corresponding generalized BH procedure.}
\label{fig:qcBreastCancerEffectSize}
\end{figure}

\begin{figure}[tbp]
    \centering
    \includegraphics[width=0.24\linewidth]{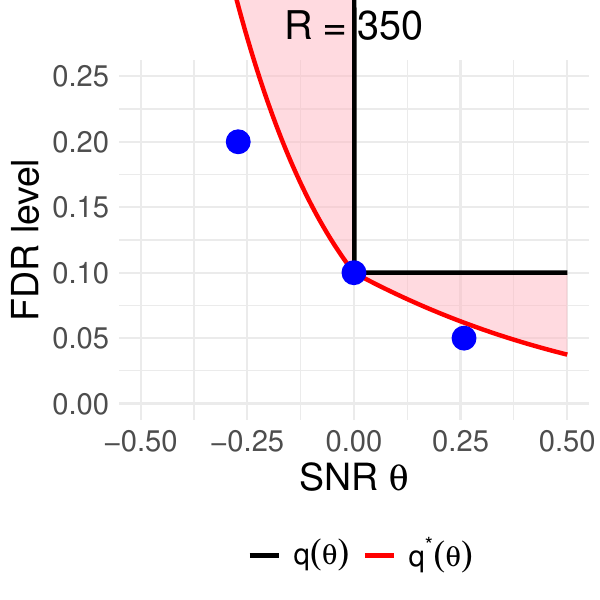}
    \includegraphics[width=0.24\linewidth]{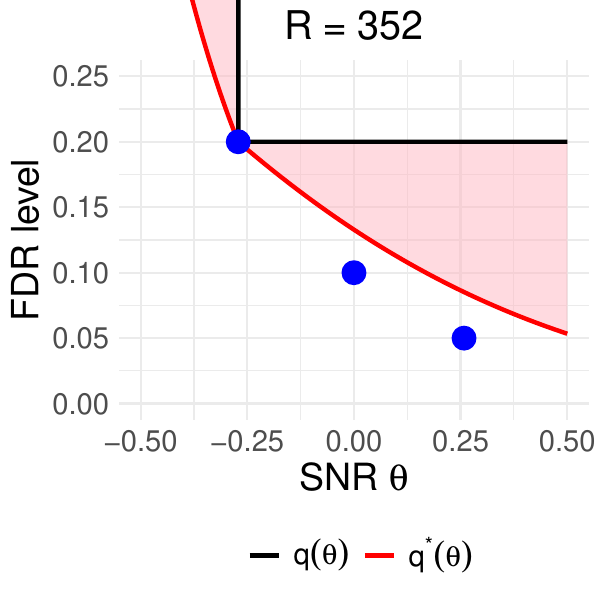}
    \includegraphics[width=0.24\linewidth]{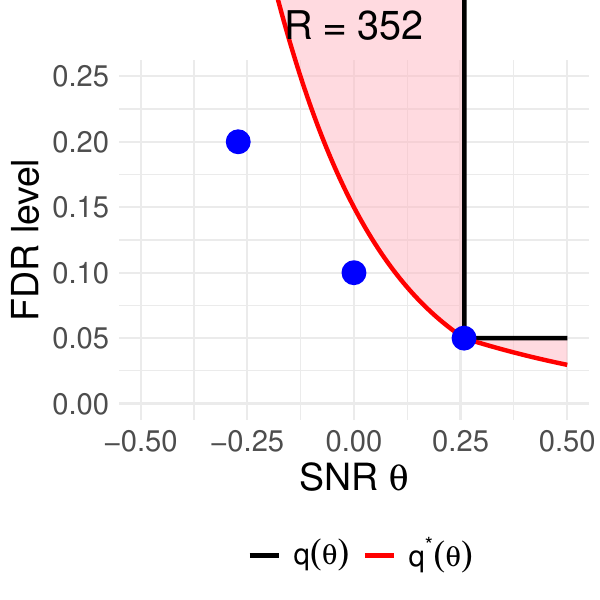}
    \includegraphics[width=0.24\linewidth]{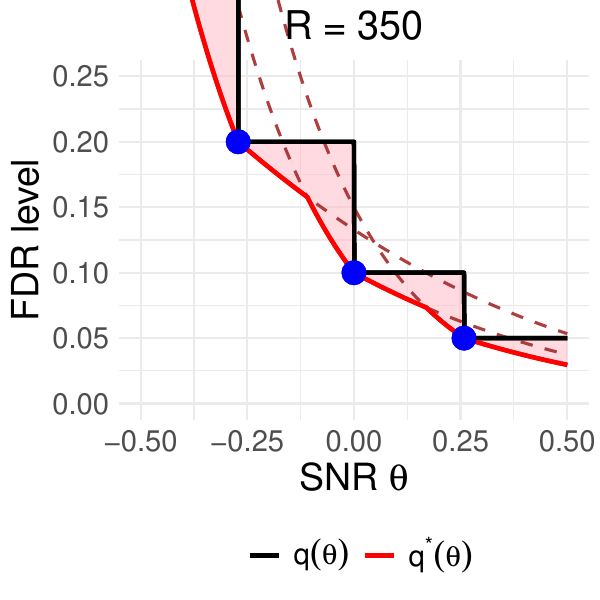}

    \caption{FDR curve control for breast cancer dataset testing the signal-to-noise ratio with hypotheses \(H_{i,\theta}: \theta_i = \mu_i/\sigma_i \geq \theta\)  with shared $F$. $R$ denotes the number of rejections produced by the corresponding generalized BH procedure. }
    \label{fig:qcBreastCancerSNR}
\end{figure}

For effect-size testing, we adopt FDR constraints of $\mu = 0$ at level $q = 0.1$, $\mu = 0.07$ (a $5\%$ decrease) at level $q = 0.05$, and $\mu = -0.07$ (a $5\%$ improvement) at level $q = 0.2$.
As mentioned above, the conclusion in \Cref{prop:touching.point} may not apply here because the distribution function $F_i$ varies across the hypotheses.
We observe that controlling any single FDR constraint is sufficient to control the whole FDR curve below $q$, and taking the infimum over the $q$ specified by all three constraints makes the $q^*$ curve strictly below $q$ and yields fewer rejections (319 compared to 350, 331, 380).
Therefore, instead of using all the FDR constraints, in this setting we recommend using fewest possible constraints as the target FDR curve $q$ such that $q^*$ is below $q$, which always exists by \Cref{prop:touching.point.heterogeneous.CDF}. (Note that $q^*$ can be computed without the data, thus selecting $q^*$ incurs no selection bias.)


For signal-to-noise ratio testing, we use the constraints $q(-0.27) = 0.2$, $q(0) = 0.1$, $q(0.26) = 0.05$, which roughly corresponds to the effect size levels used above (the median of $\hat{\sigma}_i$ is around $0.27$ and $\log_2(1.05) / 0.27 = 0.2607$). In this case, the conclusion of \Cref{prop:touching.point} applies and can be observed from \Cref{fig:qcBreastCancerSNR}.
In fact, none of these FDR constraints implies another, and the $q^*$ obtained using all three constraints touches $q$ at all three locations.
The resulting rejection sets are quite stable across different choices of $q$ (making 350 to 352 rejections).

\section{Further Discussion}

The approach to simultaneous FDR control in this paper is based on FDR curve-normalization of p-values in \eqref{eq:p.value.FDR.curve}. We refer the reader to \cite{gao2024constructive} for a more general framework that can be applied to other selective compound decision problems (e.g.\ with hybrid error criteria combining errors made at different locations) beyond multiple hypothesis testing.


There is potential for adopting an empirical Bayes approach to select the FDR curve $q$ using an estimated prior distribution of the location parameter. Heuristically, if $\theta_1,\dots,\theta_m$ are drawn independently from a distribution $\pi$, then the false discovery proportion curve for the rejection rule $\mathcal{S} = \{i: X_i \leq x\}$ can be approximated by
\[
  \FDP_\mathcal{S}(\theta) \to \PP(\theta_i \geq \theta \mid X_i \leq x) = \frac{\int_{\theta}^\infty \pi(t) F(x - t) dt}{\int_{-\infty}^\infty \pi(t) F(x - t) dt}, \quad \text{as}~m \to \infty.
\]
Thus, given an estimate of $\pi$ by deconvolving the empirical
distribution of $X_1,\dots,X_m$ \parencite[see e.g.][]{efron2014two} and
a desired number of rejections, one could use the formula above to
compute a target FDR curve. Further careful treatment would be needed
to address the double-dipping issue to guarantee FDR control.

\section*{Acknowledgement}
QZ was partly supported by the EPSRC grant EP/V049968/1.

\printbibliography

\appendix


\section{Proofs}

\label{appendix:proofs}

We maintain the assumption that $F$ (or $F_1,\dots,F_m$ in the general
case considered in \Cref{sec:numerical}) is continuous and strictly increasing.


\begin{proof}[Proof of \Cref{thm:main}]
Here we prove an extended version of \Cref{thm:main}: the inequalities
in \eqref{eq:thm:main} still hold in the more general setting
described at the beginning of \Cref{sec:numerical} with $q^*$ defined
in \eqref{eq:q-star.heterogeneous.CDF}.

  The first inequality in \eqref{eq:thm:main} directly follows from the definition $\FDR(\theta) =
  \EE[\FDP(\theta)]$. In proving the second inequality in \eqref{eq:thm:main} we will write the rejection
  set $\mathcal{S}_{\BH_q}(X_1,\dots,X_m)$ with the target FDR
  curve $q$   as $\mathcal{S}_q$. Using the assumption that $q$ is
  non-increasing, we have
  \begin{align*}
    \sup_{\theta \in \RR}\frac{\FDP(\theta)}{q(\theta)} &= \sup_{\theta
      \in \RR} \frac{\sum_{i=1}^m \1_{\{\theta_i
        \geq \theta, i\in \mathcal{S}_q\}} }{q(\theta) \max\{1,
                                                          |\mathcal{S}_q|\}} \\
    & \leq \sum_{i=1}^m \sup_{\theta \in \RR} \frac{\1_{\{\theta_i
        \geq \theta, i\in \mathcal{S}_q\}} }{q(\theta) \max\{1,
                                                          |\mathcal{S}_q|\}}
    \\
    &= \sum_{i=1}^m \frac{\1_{\{ i\in \mathcal{S}_q\}} }{q(\theta_i) \max\{1,
                                                          |\mathcal{S}_q|\}}.
  \end{align*}
 By adapting the standard ``leave-one-out'' argument \parencite[see e.g.][]{wang2022elementary}, it can be shown that
  \[
  \EE \left[ \sup_{\theta \in \RR}\frac{\FDP(\theta)}{q(\theta)}
  \right] \leq 1.
  \]
  The proof is deferred to Lemma~\ref{lemma:fdr control} below. A more
  general proof strategy can be found in \cite{gao2024constructive}.

  To complete the proof of \Cref{thm:main}, it suffices to show that
  $\mathcal{S}_{q} = \mathcal{S}_{q^{*}}$, that is, using the more
  stringent FDR curve $q^{*}$ as target always produces the same
  rejection set as just using $q$. Let
  \[
    \bar{P}^{*}_{i} = \sup_{\theta \in \RR} P_{i,\theta}/q^{*}(\theta), i=1,\dots,m,
  \]
  be the normalized p-values by the FDR curve $q^{*}$. Since $q^*(\theta)
  \le q(\theta)$, we immediately have $\bar{P}_{i} \le
   \bar{P}^\ast_{i}$ and $\calS_{q^{*}} \subseteq
   \calS_{q}$. To prove $\calS_{q^{*}} \supseteq
   \calS_{q}$, it suffices to show $\bar{P}_i^{*} \leq
   |\mathcal{S}_q|/m$ for all $i \in \calS_q$. Without loss of
   generality, assume $|\mathcal{S}_q| > 0$, so for any $i \in
   \mathcal{S}_q$, we have
   \begin{equation}
     \label{eq:help}
     F_i(X_i - \theta) / q(\theta) = P_{i,\theta} / q(\theta) \leq
     \bar{P}_i \leq |\mathcal{S}_q|/m \leq 1, \quad \text{for all}~\theta \in \RR.
   \end{equation}
   Therefore
   \begin{align*}
     \bar{P}_i^{*} &= \sup_{\theta \in \RR} \frac{F_i(X_i -
     \theta)}{\inf_{\theta' \in \RR} \sup_{j\in \{1,\ldots,m\}}    \sup_{a_j \leq x - \theta' \leq b_j}
    q(\theta') \cdot \frac{F_j(x - \theta)}{F_j(x - \theta')}} \\
    &=\sup_{\theta' \in \RR} \sup_{\theta \in \RR} \frac{F_i(X_i -
     \theta)}{ \sup_{j\in \{1,\ldots,m\}} \sup_{a_j \leq x - \theta' \leq b_j} q(\theta') \cdot
    \frac{F_j(x - \theta)}{F_j(x - \theta')}} \\
    &\leq \sup_{\theta' \in \RR} \sup_{\theta \in \RR}
      \frac{F_i(\max\{X_i,a_i+\theta'\} - \theta)}{ q(\theta') \cdot
      \frac{F_i(\max\{X_i,a_i+\theta'\} - \theta)}{F_i(\max\{X_i,a_i+\theta'\}
      - \theta')}} \\
    & = \sup_{\theta' \in \RR} \frac{F_i(\max\{X_i,a_i+\theta'\}
      - \theta')}{q(\theta')} \\
   &= \sup_{\theta' \in \RR} \max\left\{
     \frac{P_{i,\theta'}}{q(\theta')}, \frac{1}{m}
     \right\} \leq |\mathcal{S}_q|/m.
   \end{align*}
   The first inequality above follows from \eqref{eq:help}, $q(\theta)
   \leq 1$ for all $\theta$, and the monotonicity of $F$. This
   completes the proof.
\end{proof}

\begin{lemma}\label{lemma:fdr control}
    For the generalized $\BH_q$ procedure, we have
    \[
    \EE\left\{ \frac{ \1_{\{i\in \mathcal{S}_q\}}  }{q(\theta_i) \max\{1,|\mathcal{S}_q|\}} \right\}  \le \frac{1}{m}, \text{~for~} i=1,\ldots,m.
    \]
\end{lemma}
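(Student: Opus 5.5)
The plan is the standard leave-one-out argument for BH, applied to the normalized p-values $\bar P_1,\dots,\bar P_m$ with $q=1$. Two ingredients are needed. The first is a distributional bound on $\bar P_i$ at the true location $\theta_i$. The second is a combinatorial fact: on the event $\{i\in\mathcal{S}_q\}$, the size $|\mathcal{S}_q|$ equals a quantity that does not depend on $X_i$.

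\emph{Step 1 (super-uniformity after normalization).} Since $\bar P_i=\sup_\theta P_{i,\theta}/q(\theta)\ge P_{i,\theta_i}/q(\theta_i)$, we get $\{\bar P_i\le t\}\subseteq\{F_i(X_i-\theta_i)\le t\,q(\theta_i)\}$. Because $F_i$ is continuous and strictly increasing, $F_i(X_i-\theta_i)$ is exactly Uniform$(0,1)$ when $X_i\sim\PP_{i,\theta_i}$. Hence for every $t\in[0,1]$,
\[
\PP(\bar P_i\le t)\le t\,q(\theta_i).
\]
This uses $q(\theta_i)\le 1$, so that $t\,q(\theta_i)\le 1$. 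It is the only place where the location structure enters.

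\emph{Step 2 (leave-one-out).} Let $\tilde{\mathcal S}^{(i)}$ be the rejection set obtained by applying the ($q=1$) BH procedure to the vector in which $\bar P_i$ is replaced by $0$, and let $R_i=|\tilde{\mathcal S}^{(i)}|\ge 1$. Since $\bar P_j$ depends only on $X_j$, $R_i$ is a function of $X_{-i}=(X_j)_{j\neq i}$ alone. I will then verify the usual claim: on $\{i\in\mathcal{S}_q\}$ one has $|\mathcal{S}_q|=R_i$, and moreover $\{i\in\mathcal S_q\}=\{\bar P_i\le R_i/m\}$. This follows from the characterization of the BH set as the largest self-consistent set, where $j$ belongs to $\mathcal S$ iff $\bar P_j\le |\mathcal S|/m$. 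Lowering $\bar P_i$ to $0$ cannot shrink the set. Conversely, if $\bar P_i\le R_i/m$, then $\tilde{\mathcal S}^{(i)}$ is self-consistent for the original vector, and maximality in both directions gives equality. Values $\bar P_j>1$ cause no difficulty, since such $j$ are simply never rejected.

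\emph{Step 3 (conclude).} Conditioning on $X_{-i}$ and using independence of $X_i$ from $X_{-i}$,
\[
\EE\left\{\frac{\1_{\{i\in\mathcal S_q\}}}{q(\theta_i)\max\{1,|\mathcal S_q|\}}\,\Big|\,X_{-i}\right\}
=\frac{\PP(\bar P_i\le R_i/m\mid X_{-i})}{q(\theta_i)R_i}\le\frac{(R_i/m)\,q(\theta_i)}{q(\theta_i)R_i}=\frac1m,
\]
by Step 1 with $t=R_i/m\le1$. Taking expectations gives the lemma.

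The main obstacle is Step 2: carefully establishing the identity $\{i\in\mathcal S_q\}=\{\bar P_i\le R_i/m\}$ together with $|\mathcal S_q|=R_i$ on that event, allowing for ties and for normalized p-values exceeding $1$. I would argue directly from the maximal self-consistent set description rather than from order statistics.
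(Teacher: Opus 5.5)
Your proposal is correct and is essentially the paper's proof: the same leave-one-out set obtained by setting $\bar P_i$ to $0$, the same fact that $|\mathcal{S}_q|$ equals the leave-one-out count $R_i$ on $\{i\in\mathcal{S}_q\}$, and the same conditional bound $\PP(P_{i,\theta_i}\le q(\theta_i)R_i/m \mid X_{-i}) = q(\theta_i)R_i/m$, which rests on $\bar P_i \ge P_{i,\theta_i}/q(\theta_i)$ and exact uniformity of $P_{i,\theta_i}$. The only differences are that the paper cites the simple-selection-rule property of BH where you verify it directly via maximal self-consistent sets, and that you prove the full identity $\{i\in\mathcal S_q\}=\{\bar P_i\le R_i/m\}$ where only the inclusion $\subseteq$ is needed.
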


\begin{proof}[Proof of \Cref{lemma:fdr control}]
Recall that the generalized $\BH_q$ procedure operates by applying the
standard BH procedure to the normalized
    $p$-values $\bar{\bm P} = (\bar{P}_1, \dots, \bar{P}_m)$. Let $\bm
    S(\bar{\bm P})=(S_1, \ldots, S_m) \in \{0,1\}^m$ denote its
    selection indicator vector. Define the leave-one-out vector $\bar{\bm P}^{-i}$ by removing $\bar{P}_i$. We construct a leave-one-out selected set $\calS_q^{-i}$ by manually setting the $i$-th normalized p-value to $0$. That is, its selection indicator vector is defined as $\bm S(\bar{P}_1, \dots, \bar{P}_{i-1}, 0, \bar{P}_{i+1}, \dots, \bar{P}_m)$.
    A useful property of the standard BH procedure is that it is a \emph{simple selection rule} \parencite{benjamini2014selective,gao2024constructive}: if hypothesis $i$ is selected, changing $\bar{P}_i$ to $0$ (making it even more significant) will not alter the total number of rejections or the selection status of any other hypotheses.
    This property guarantees that whenever $i \in \calS_q$ (i.e., $S_i(\bar{\bm P}) = 1$), the selected set remains exactly the same, which implies $|\calS_q| = |\calS_q^{-i}|$.
    Furthermore, the leave-one-out selected set $\calS_q^{-i}$ depends
    only on $\bar{\bm P}^{-i}$, so $\calS_q^{-i} \ind
    P_{i,\theta_i}$. Then we have
    \begin{equation*}
    \begin{aligned}
            \EE\left\{ \frac{ \1_{\{i\in \mathcal{S}_q\}}
            }{q(\theta_i) \max\{1,|\mathcal{S}_q|\}} \right\}
            =&~\EE\left\{ \frac{ \1_{\{i\in \mathcal{S}_q, P_{i, \theta_i} \leq q(\theta_i) |\mathcal{S}_q|/m\}}  }{q(\theta_i) \max\{1,|\mathcal{S}_q|\}} \right\}  \\
            =&~ \EE\left\{ \frac{ \1_{\{i\in \mathcal{S}_q, P_{i,
            \theta_i} \leq q(\theta_i) |\mathcal{S}_q^{-i}|/m\}}
            }{q(\theta_i) \max\{1,|\mathcal{S}_q^{-i}|\}} \right\} \\
            \leq&~ \EE\left\{ \frac{ \1_{\{P_{i,
            \theta_i} \leq q(\theta_i) |\mathcal{S}_q^{-i}|/m\}}
            }{q(\theta_i) \max\{1,|\mathcal{S}_q^{-i}|\}} \right\} \\
           = &~ \EE \left\{ \frac{q(\theta_i) |\mathcal{S}_q^{-i}|/m}{q(\theta_i) \max\{1,|\mathcal{S}_q^{-i}|\}} \right\} \le  \frac{1}{m}. \\
    \end{aligned}
    \end{equation*}
    This completes the proof of \Cref{lemma:fdr control}.
\end{proof}



Next, we provide a proof of \Cref{prop:touching.point}. For a pair $(t, q(t)) \in \RR \times [0,1]$, define
\begin{align}
    \bar{q}_t(\theta) := q(t) \cdot \1_{\{\theta \ge t\}} + \1_{\{\theta <
      t\}},
\end{align}
which generalizes the simple step function $q_{\BH}$ in \eqref{eq:q-bh}. Let $\bar{q}^*_t(\theta)$ be the corresponding $q^*$ function obtained by the transformation in \eqref{eq:q-star}.

\begin{lemma}\label{lemm:minimizer}
   For all $t \in \RR$,
\begin{align*}
\bar{q}^*_t(\theta)
=
\min\left\{
\bar{q}_t(\theta),\,
\sup_{a \le x-t \le b}
\bar{q}_t(t)\frac{F(x-\theta)}{F(x-t)}
\right\}
=
\min\left\{
1,\,
\sup_{a \le x-t \le b}
\bar{q}_t(t)\frac{F(x-\theta)}{F(x-t)}
\right\}.
\end{align*}
In particular, if $\bar{q}_t(t)=1$, then $\bar{q}^*_t(\theta) = 1$ for all $\theta$.
\end{lemma}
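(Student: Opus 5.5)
We compute $\bar{q}^*_t(\theta)$ directly from \eqref{eq:q-star} by splitting the infimum over $\theta'$ according to whether $\theta' \ge t$ or $\theta' < t$. Write
\[
g(\theta',\theta) := \sup_{a \le x-\theta' \le b} \bar q_t(\theta')\,\frac{F(x-\theta)}{F(x-\theta')},
\]
with $a = F^{-1}(\bar q_t(\theta')/m)$ and $b = F^{-1}(\bar q_t(\theta'))$, so that $\bar q^*_t(\theta) = \inf_{\theta'} g(\theta',\theta)$. The plan is to show three things: every $\theta' < t$ contributes at least $1$; every $\theta' \ge t$ contributes at least $g(t,\theta)$; and the value $1$ is always attained. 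Together with $\bar q^*_t \le \bar q_t$ (from $\theta'=\theta$), these give both expressions in the lemma.

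\textbf{Step 1 ($\theta'<t$).} Here $\bar q_t(\theta') = 1$, so $b = F^{-1}(1) = +\infty$. The supremum therefore ranges over all $x$ with $x - \theta' \ge F^{-1}(1/m)$. Letting $x \to \infty$ gives $F(x-\theta)/F(x-\theta') \to 1$, so $g(\theta',\theta) \ge 1$. Since $F(-\infty)=0$ and $F(+\infty)=1$, both numerator and denominator tend to $1$. So these $\theta'$ contribute at least $1$. Moreover, taking any $\theta'<t$ with $\theta'\le\theta$ makes the ratio $\le 1$ for every $x$ (because $F$ is increasing). Hence the value $1$ is attained, the infimum over this range is exactly $1$, and $\bar q^*_t \le 1$.

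\textbf{Step 2 ($\theta'\ge t$).} Now $\bar q_t(\theta') = q(t) =: c$ does not depend on $\theta'$, so $a$ and $b$ are the same constants $F^{-1}(c/m)$ and $F^{-1}(c)$ as at $\theta'=t$. Substitute $y = x - \theta'$. Then
\[
g(\theta',\theta) = \sup_{a\le y\le b} c\,\frac{F(y+\theta'-\theta)}{F(y)}.
\]
This is non-decreasing in $\theta'$, since $F$ is increasing. Hence the infimum over $\theta'\ge t$ is attained at $\theta'=t$ and equals $g(t,\theta)$. This translation-invariance step is the crux, and it is routine once the substitution is made.

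\textbf{Step 3 (combining).} From Steps 1 and 2,
\[
\bar q^*_t(\theta) = \min\{1,\, g(t,\theta)\},
\]
which is the second expression in the lemma. For the first expression:
\begin{itemize}
\item If $\theta \ge t$, then $g(t,\theta) \le c \le 1$, since the ratio is $\le 1$. So $\min\{\bar q_t(\theta), g(t,\theta)\} = \min\{c, g\} = g = \min\{1, g\}$.
\item If $\theta < t$, then $\bar q_t(\theta) = 1$ and the two expressions coincide trivially.
\end{itemize}

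\textbf{The case $\bar q_t(t)=1$.} Then $c = 1$ and $b = +\infty$. Sending $y\to\infty$ in Step 2 gives $g(t,\theta)\ge 1$, so $\bar q^*_t \equiv 1$.

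The only delicate point is the endpoint $b=F^{-1}(1)=+\infty$. The supremum there must be read as including the limit $x\to\infty$, which relies on the convention fixed after \eqref{eq:q-star}. I expect this to be the main thing to handle carefully.
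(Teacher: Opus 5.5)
Your proposal is correct and follows essentially the same route as the paper's proof: split the infimum at $\theta'=t$, bound the $\theta'<t$ branch below by $1$ via $x\to\infty$ (using $b=F^{-1}(1)=+\infty$), and use the substitution $y=x-\theta'$ together with the monotonicity of $F$ to show that the $\theta'\ge t$ branch is minimized at $\theta'=t$. The differences are cosmetic: you get the matching upper bound $1$ from some $\theta'<t$ with $\theta'\le\theta$ rather than from $\theta'=\theta$, and you explicitly check that the two displayed expressions agree (via $g(t,\theta)\le q(t)$ for $\theta\ge t$), a step the paper leaves implicit.
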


The proof of \Cref{lemm:minimizer} is deferred. By taking $\theta = t$
in \Cref{lemm:minimizer} and using the definition of $\bar q_t(t)$, we
immediately find
\begin{align*}
        \bar{q}^*_t(t) =  q(t) \quad \text{for all}~t \in \RR.
\end{align*}
That is, the modified curve touches the original curve at $t$.

The following definition is useful in studying the relation between $q$ and $q^*$.

\begin{definition}\label{defi:order}
   For any $\theta_1,\theta_2 \in \RR$ and $q_1,q_2 \in (0,1]$, we write $(\theta_1,q_1) \preceq (\theta_2,q_2)$ iff $\bar q_{\theta_2}^{*}(\theta_1) \le q_1$, and $(\theta_1,q_1) \prec (\theta_2,q_2)$ iff $\bar q_{\theta_2}^{*}(\theta_1) < q_1$.
   We write $(\theta_1,q_1) \sim (\theta_2,q_2)$ iff $\bar q_{\theta_1}^*(\theta) = \bar q_{\theta_2}^*(\theta)$ for all $\theta$.
\end{definition}

\begin{lemma}\label{lemm:equivalence}
    If $F$ satisfies the monotone ratio property~\eqref{cond:monotone.ratio}, then  $(\theta_1, q_1) \preceq (\theta_2, q_2)$  is true if and only if \begin{align}\label{eq:equivalence}
        F^{-1}(q_1) + \theta_1 \ge F^{-1}(q_2) + \theta_2,~\text{and}~
        F^{-1}(q_1/m) + \theta_1 \ge F^{-1}(q_2/m) + \theta_2.
    \end{align}
    Furthermore, $(\theta_1, q_1) \prec (\theta_2, q_2)$ is true if and only if the inequalities in \eqref{eq:equivalence} hold and at least one of them is strict.
\end{lemma}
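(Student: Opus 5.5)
We prove \Cref{lemm:equivalence} by computing $\bar q^*_{\theta_2}(\theta_1)$ explicitly via \Cref{lemm:minimizer} and then reducing everything to the endpoints $a,b$ using the monotone ratio property. Write $a_2 = F^{-1}(q_2/m)$ and $b_2 = F^{-1}(q_2)$. By \Cref{lemm:minimizer} (with $t=\theta_2$, $\bar q_{\theta_2}(\theta_2)=q_2$),
\[
\bar q^*_{\theta_2}(\theta_1) = \min\Big\{1,\ \sup_{a_2\le y\le b_2} q_2\,\frac{F(y+\theta_2-\theta_1)}{F(y)}\Big\},
\]
where we substitute $y = x-\theta_2$. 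The monotone ratio property says $y\mapsto F(y+\theta_2-\theta_1)/F(y)$ is monotone. Hence the supremum is attained at $y=a_2$ or $y=b_2$, and equals the maximum of the two endpoint values. Note that if $q_2 = 1$ then $b_2=+\infty$; the endpoint value there is interpreted as a limit, namely $1$, and the endpoint analysis still goes through.

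With this formula, $(\theta_1,q_1)\preceq(\theta_2,q_2)$ means $\min\{1,\max(E_a,E_b)\}\le q_1$, where
\[
E_a = \frac{q_2}{m}\cdot\frac{F(a_2+\theta_2-\theta_1)}{q_2/m} = F(a_2+\theta_2-\theta_1), \qquad E_b = F(b_2+\theta_2-\theta_1),
\]
using $F(a_2)=q_2/m$ and $F(b_2)=q_2$ (valid by continuity of $F$).

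The case split is as follows.
\begin{itemize}
\item If $q_1<1$, the condition becomes $F(b_2+\theta_2-\theta_1)\le q_1$ and $F(a_2+\theta_2-\theta_1)\le q_1$.
\item The first inequality is equivalent, since $F$ is strictly increasing, to $b_2+\theta_2 \le F^{-1}(q_1)+\theta_1$. This is the first half of \eqref{eq:equivalence}.
\end{itemize}

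The second half is less direct. The condition $F(a_2+\theta_2-\theta_1)\le q_1$ is not literally $F(a_2+\theta_2-\theta_1)\le q_1/m$, so I must check that the roles match. Rewriting the $a$-endpoint ratio carefully gives $q_2\,F(a_2+\theta_2-\theta_1)/F(a_2) = m\,F(a_2+\theta_2-\theta_1)$, and the required bound $\le q_1$ is then exactly $F(a_2+\theta_2-\theta_1)\le q_1/m$, i.e.\ $a_2+\theta_2\le F^{-1}(q_1/m)+\theta_1$. This bookkeeping, making sure the $1/m$ factor cancels correctly, is the step I expect to need the most care. The boundary case $q_1=1$ makes both sides trivially true: the left because of the $\min\{1,\cdot\}$, and the right because $F^{-1}(1)=+\infty$ and $F^{-1}(1/m)\ge F^{-1}(q_2/m)+\theta_2-\theta_1$ is also needed. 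That last inequality needs checking; if it fails, I would handle $q_1=1$ separately as a convention of the order.

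For the strict version, $\prec$ means $\max(E'_a,E'_b)<q_1$, where $E'_a,E'_b$ are the endpoint values of the supremum, with the same case reduction as above. Strict monotonicity of $F$ turns this into strictness of the corresponding inequalities in \eqref{eq:equivalence}. To match the stated form (both weak inequalities hold and at least one is strict), I would argue as follows.
\begin{itemize}
\item Under the monotone ratio property, the maximum of two endpoint values is strictly below $q_1$ iff both are.
\item So I need to reconcile "both strict" with "at least one strict". I would show that when $\theta_1\ne\theta_2$ and one inequality is tight, monotonicity of the ratio forces the other endpoint to be the smaller one, so strictness at the maximizing endpoint is what matters.
\item If this does not line up with the statement, I would re-derive which endpoint attains the supremum, according to the sign of $\theta_2-\theta_1$ and the direction of monotonicity, and determine strictness from that endpoint alone.
\end{itemize}
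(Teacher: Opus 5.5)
Your argument for $\preceq$ is the same as the paper's. You apply \Cref{lemm:minimizer} with $t=\theta_2$, use the monotone ratio property to reduce the supremum to the endpoints $a_2,b_2$, substitute $F(a_2)=q_2/m$ and $F(b_2)=q_2$, and invert $F$. Two repairs are needed.
\begin{enumerate}
\item Your first display for $E_a$ uses the prefactor $q_2/m$ instead of $q_2$. The correct value is $E_a=q_2F(a_2+\theta_2-\theta_1)/F(a_2)=mF(a_2+\theta_2-\theta_1)$, as you notice later. The first bullet of your case split must use this value.
\item Do not try to rescue $q_1=1$. There $\preceq$ always holds, because $\bar q^*_{\theta_2}(\theta_1)\le\bar q_{\theta_2}(\theta_1)\le 1$ (take $\theta'=\theta_1$). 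But for $m\ge 2$ the second inequality in \eqref{eq:equivalence} fails whenever $\theta_1<\theta_2+F^{-1}(q_2/m)-F^{-1}(1/m)$. So the equivalence is false at $q_1=1$. The paper's proof restricts to $q_1<1$ at its first step, although the statement does not, and you should state that restriction explicitly.
\end{enumerate}

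The real gap is in the strict part. Your endpoint computation already gives the sharp statement: for $q_1<1$, $(\theta_1,q_1)\prec(\theta_2,q_2)$ holds iff $\max\{E_a,E_b\}<q_1$, iff \emph{both} inequalities in \eqref{eq:equivalence} are strict. The reconciliation you then plan cannot work, because the \emph{if} half of the stated clause is false.

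Suppose exactly one inequality is tight, say $\theta_1+F^{-1}(q_1)=\theta_2+b_2$. Then $E_b=F(F^{-1}(q_1))=q_1$, so $\bar q^*_{\theta_2}(\theta_1)=\min\{1,\max\{E_a,E_b\}\}\ge q_1$, no matter which endpoint is the maximizer. Such configurations exist. Take $F=\Phi$, $m=10$, $q_1=0.05$, $q_2=0.1$, $\theta_2=0$ and $\theta_1=\Phi^{-1}(0.1)-\Phi^{-1}(0.05)\approx 0.363$. The first inequality is an equality and the second is strict ($\approx -2.21$ versus $-2.33$), yet $\bar q^*_{\theta_2}(\theta_1)=q_1$.

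Only the \emph{only if} half is true, and that is all the paper proves. It shows by contradiction that the two equalities cannot hold simultaneously: otherwise \Cref{lemm:minimizer} gives $\bar q^*_{\theta_2}(\theta_1)=\min\{\bar q_{\theta_2}(\theta_1),q_1\}=q_1$. Your characterization by two strict inequalities gives this direction at once and is sharper than what the paper states. It is also the only direction used in the proof of \Cref{prop:touching.point}. So stop there and do not attempt the converse.
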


The proof of \Cref{lemm:equivalence} is also deferred. Next we prove \Cref{prop:touching.point} in the main text.

\begin{proof}[Proof of \Cref{prop:touching.point}]
Denote $A(\theta)=\theta+F^{-1}(q(\theta)/m)$, $B(\theta)=\theta+F^{-1}(q(\theta))$, and $S(\theta)=A(\theta)+B(\theta)$.
As $q$ is decreasing and $q(\underline{\theta}) = 1$, we only need to consider constraints corresponding to $\theta >\underline{\theta}$.
Since $q(\theta)$ is bounded away from zero, $S(\theta)$ is bounded below for $\theta >\underline{\theta}$ and $S(\theta) \rightarrow +\infty$ when $\theta \rightarrow +\infty$. Define $S_*:=\inf_\theta S(\theta) =\inf_{\theta > \underline{\theta}} S(\theta) > -\infty$.
Then there exists a  $\theta^*$ such that $S(\theta^*) = S_*$. (This
is because there exists a bounded sequence $\{\theta_n\}$ (not necessarily
with distinct terms) such that $S(\theta_n)\to S_*$, and among them,
there further exists a subsequence $\{\theta_{n_k}\}$ such that
$(\theta_{n_k},q(\theta_{n_k}))$ converges, and denote $\theta^* =
\lim \theta_{n_k}$. As $F$ and $F^{-1}$ are continuous, and $q$ is
right-continuous and decreasing, $S$ is lower semi-continuous so $S(\theta^*) =S_*$.)

We next show that $(\theta^*,q(\theta^*))$ cannot be strictly dominated in the sense of the relation $\prec$ in \Cref{defi:order}.  Otherwise there exists $(s,q(s)) \succ (\theta^*,q(\theta^*))$. By \Cref{lemm:equivalence}, this implies that
\begin{equation*}
   s+a_s \le \theta^*+a^*, \quad s+b_s \le \theta^*+b^*,
\end{equation*}
where $a^* = F^{-1}(q(\theta^*)/m)$, $b^* = F^{-1}(q(\theta^*))$, $a_s
= F^{-1}(q(s)/m)$, and $b_s = F^{-1}(q(s))$, with at least one strict
inequality among the two. Summing up the two inequalities yields
\begin{align*}
2s+a_s+b_s < 2\theta^*+a^*+b^* = S_*,
\end{align*}
which contradicts the definition of $S_*$.


Below, we show that the $q^*$ curve passes through $(\theta^*,q(\theta^*))$. As noted in the main text, $q^*(\theta^*) \leq q(\theta^*)$ by definition.
 If $q^*(\theta^*) < q(\theta^*)$, we have, using the definition of $q^*$ and the properties of the infimum,
   \begin{equation*}
       \sup_{a\le x - \theta^\prime \le b} q(\theta^\prime)
       \frac{F(x-\theta^*)}{F(x - \theta^\prime)} <  q(\theta^*) \quad
       \text{for some}~\theta'.
   \end{equation*}
   Applying \Cref{lemm:minimizer} with $t=\theta'$ and
   $\theta=\theta^*$, we know that
   \[\bar{q}^*_{\theta'}(\theta^*) \le \sup_{a \le x-\theta' \le b}
   q(\theta')\frac{F(x-\theta^*)}{F(x-\theta')}.
   \]
   Combining these inequalities yields $\bar{q}^*_{\theta'}(\theta^*)
   < q(\theta^*)$, which contradicts the observation that $(\theta^*,
   q(\theta^*))$ cannot be dominated. Therefore, $q^*(\theta^*) = q(\theta^*)$.
\end{proof}



\begin{proof}[Proof of \Cref{lemm:minimizer}]\label{proof:lemm:minimizer}
We consider two cases. For $\theta' < t$, we have
$\bar q_t(\theta') = 1$ and $b = F^{-1}(1)=\infty$.
Therefore,
\begin{align*}
\sup_{x-\theta' \ge a} \bar q_t(\theta')
\frac{F(x-\theta)}{F(x-\theta')}
= \sup_{x-\theta' \ge a}
\frac{F(x-\theta)}{F(x-\theta')}
\ge \lim_{x \to \infty} \frac{F(x-\theta)}{F(x-\theta')}
= 1.
\end{align*}
For $\theta' \ge t$, we have $\bar q_t(\theta') = q(t)$, and
\begin{align*}
&\sup_{a \le x-\theta' \le b}\bar q_t(\theta')\frac{F(x-\theta)}{F(x-\theta')}\\
=& \sup_{a \le x' \le b}\bar q_t(\theta')\frac{F(x'+\theta'-\theta)}{F(x')} && (x' := x-\theta')\\
=& \sup_{F^{-1}(q(t)/m) \le x' \le F^{-1}(q(t))} q(t)\frac{F(x'+\theta'-\theta)}{F(x')} && (\bar q_t(\theta') = q(t)\ \text{for}\ \theta'\ge t)\\
\ge& \sup_{F^{-1}(q(t)/m) \le x' \le F^{-1}(q(t))} q(t)\frac{F(x'+t-\theta)}{F(x')} && (\theta' \ge t,\ F\ \text{increasing, equality can be obtained at $\theta' = t$.
})\\
=& \sup_{a \le x-t \le b}\bar q_t(t)\frac{F(x-\theta)}{F(x-t)} && (x := x' + t).
\end{align*}
This shows $\bar q_t^*(\theta)$ is lower bounded by the expressions in
\Cref{lemm:minimizer}. Now consider
$\theta' = \theta$, so
\begin{align*}
\sup_{a \le x-\theta \le b} \bar q_t(\theta)
\frac{F(x-\theta)}{F(x-\theta)}
= \bar q_t(\theta) \le 1.
\end{align*}
This establishes the equality in \Cref{lemm:minimizer}.

For the special case $\bar{q}_t(t)=1$, we have $\bar{q}_t(t) = 1$ and $b = F^{-1}(q(t)) = \infty$. Then, for $\theta \ge t$, $\sup_{a \le x-t \le b} \bar{q}_t(t)\frac{F(x-\theta)}{F(x-t)} = 1$; for $\theta < t$, we have $\sup_{a \le x-t \le b} \bar{q}_t(t)\frac{F(x-\theta)}{F(x-t)} \ge 1$, and therefore $\min\left\{ 1, \sup_{a \le x-t \le b} \bar{q}_t(t)\frac{F(x-\theta)}{F(x-t)} \right\} = 1$.
\end{proof}

\begin{proof}[Proof of \Cref{prop:lower.bound}]

Consider an arbitrary $\theta < 0$. For
the standard $\BH_q$ procedure, the monotone ratio property ensures
that the supremum in the definition of $q^*$ in \eqref{eq:q-star} is
attained at the boundary $t = F^{-1}(q/m)$. Thus, $q^*(\theta)
= q F(t - \theta) / F(t)$, or equivalently $F(t -
\theta) = q^*(\theta) / m$.
Consider the parameter configuration $\theta_1 = \cdots = \theta_m = \theta$. Under this configuration, the null hypothesis $H_{i,\theta}$ (which is $\theta_i \ge \theta$) is true for all $i = 1, \dots, m$. Therefore, any rejection is a false discovery, so
\begin{align*}
    \FDR(\theta) &= \mathbb{P}(|\calS_{\BH_q}| \ge 1) \\
    &= 1 - \mathbb{P}(X_i > t \text{ for all } i=1,\dots,m) \\
    &= 1 - \left( 1 - F(t - \theta) \right)^m \\
    &=1 - \left( 1 - \frac{q^*(\theta)}{m} \right)^m \\
    &\ge 1 - e^{-q^*(\theta)}.
\end{align*}
This completes the proof.
\end{proof}

\begin{proof}[Proof of \Cref{lemm:equivalence}]
Let $a_1 = F^{-1}(q_1/m), b_1 = F^{-1}(q_1)$ and  $a_2 = F^{-1}(q_2/m), b_2 = F^{-1}(q_2)$.
Then
\begin{align*}
&(\theta_1,q_1)\preceq (\theta_2,q_2)\\
\Longleftrightarrow\;&
\sup_{a_2 \le x-\theta_2 \le b_2}
q_2 \frac{F(x-\theta_1)}{F(x-\theta_2)}
\le q_1 && \text{(\Cref{lemm:minimizer}, $q_1 < 1$)} \\
\Longleftrightarrow\;&
q_2 \frac{F(\theta_2+a_2-\theta_1)}{F(a_2)} \le q_1,
\quad
q_2 \frac{F(\theta_2+b_2-\theta_1)}{F(b_2)} \le q_1
&& \text{(Condition~\eqref{cond:monotone.ratio})} \\
\Longleftrightarrow\;&
mF(\theta_2+a_2-\theta_1) \le q_1,
\quad
F(\theta_2+b_2-\theta_1) \le q_1
&& \text{($F(a_2)=q_2/m$, $F(b_2)=q_2$)} \\
\Longleftrightarrow\;&
F(\theta_2+a_2-\theta_1) \le F(a_1),
\quad
F(\theta_2+b_2-\theta_1) \le F(b_1)
&& \text{($F(a_1)=q_1/m$, $F(b_1)=q_1$)} \\
\Longleftrightarrow\;&
\theta_2+a_2-\theta_1 \le a_1,
\quad
\theta_2+b_2-\theta_1 \le b_1
&& \text{($F$ is continuous and increasing)} \\
\Longleftrightarrow\;&
\theta_2+a_2 \le \theta_1+a_1,
\quad
\theta_2+b_2 \le \theta_1+b_1.
 \end{align*}

For $\prec$, we automatically have $\theta_2+a_2 \le \theta_1+a_1$, $\theta_2+b_2 \le \theta_1+b_1$.
We prove by negation that the two  equalities can not hold
simultaneously.
Otherwise, suppose $\theta_2+a_2 = \theta_1+a_1$ and $\theta_2+b_2 = \theta_1+b_1$.
By Lemma~\ref{lemm:minimizer} and the monotone ratio property,
\begin{align*}
    \bar{q}^*_{\theta_2}(\theta_1) &= \min\left\{ \bar{q}_{\theta_2}(\theta_1), \max\left\{ q_2 \frac{F(\theta_2 + b_2-\theta_1)}{q_2}, q_2 \frac{F(\theta_2 + a_2-\theta_1)}{q_2/m} \right\} \right\} \\
    &= \min\left\{ \bar{q}_{\theta_2}(\theta_1), \max\left\{ F(b_1), m F(a_1) \right\} \right\} \\
    &= \min\left\{ \bar{q}_{\theta_2}(\theta_1), q_1 \right\}.
\end{align*}
If $\theta_1 < \theta_2$, then $\bar{q}_{\theta_2}(\theta_1) = 1 \ge q_1$, so $\bar{q}^*_{\theta_2}(\theta_1) = q_1$.
If $\theta_1 \ge \theta_2$, then $\theta_1 + a_1 = \theta_2 + a_2$ implies $a_1 \le a_2$, so $q_1 \le q_2$. Thus $\bar{q}_{\theta_2}(\theta_1) = q_2 \ge q_1$, and again $\bar{q}^*_{\theta_2}(\theta_1) = q_1$.
In both cases, $\bar{q}^*_{\theta_2}(\theta_1) = q_1$, which
contradicts the assumption $(\theta_1, q_1) \prec (\theta_2, q_2)$
(i.e., $\bar{q}^*_{\theta_2}(\theta_1) < q_1$).
\end{proof}

\begin{proof}[Proof of \Cref{prop:touching.point.heterogeneous.CDF}]
We prove the more general result that allows different $F_i$ by induction on $K:=|\Theta|$.
If $K=1$, \Cref{lemm:minimizer} 
can be easily extended to this general problem and implies
$q^*(\theta)=q(\theta)$ for $\theta \in \Theta$. The details are
omitted.

Now suppose the claim holds for any $K$ jump points, and consider $\Theta=\{\theta_1,\dots,\theta_K,\theta_{K+1}\}$.
By the induction hypothesis, there exists a subset $\Theta_{K}' \subseteq \{\theta_1,\dots,\theta_K\}$ such that the associated curve $q^*_{\Theta_{K}'}$ satisfies
$
q^*_{\Theta_{K}'}(\theta_j)\le q(\theta_j)$, $j=1,\dots,K$,
and touches $q$ at at least one point in $\Theta_{K}'$.
If $q^*_{\Theta_{K}'}(\theta_{K+1}) \le q(\theta_{K+1})$,
then it controls all $K+1$ constraints and touches $q(\theta)$ at at least one point.
Otherwise, define
$\Theta' := \Theta_{K}' \cup \{\theta_{K+1}\}$.
Recall the definition of $q^*$ in \eqref{eq:q-star} where $q^*_{\Theta'}$ is obtained by taking the infimum over one additional $\theta_{K+1}$, then we have $q^*_{\Theta'}(\theta)\le q^*_{\Theta_{K}'}(\theta)$, for all $\theta \in \mathbb R$.
Therefore,
$q^*_{\Theta'}(\theta_j)\le q^*_{\Theta_{K}'}(\theta_j)\le q(\theta_j)$, $j=1,\dots,K$.
At the new point $\theta_{K+1}$, the term in \eqref{eq:q-star.heterogeneous.CDF} corresponding to $\theta'=\theta_{K+1}$ equals
\[
\sup_{i\in[m]}
\sup_{a_i \le x-\theta_{K+1} \le b_i}
q(\theta_{K+1})\frac{F_i(x-\theta_{K+1})}{F_i(x-\theta_{K+1})}
=
q(\theta_{K+1}),
\]
and by the inductive hypothesis $q^*_{\Theta_{K}'}(\theta_{K+1})>q(\theta_{K+1})$.
Therefore,
\[
q^*_{\Theta'}(\theta_{K+1})
=
\min\left\{q^*_{\Theta_{K}'}(\theta_{K+1}),\, q(\theta_{K+1})\right\}
=
q(\theta_{K+1}).
\]
Thus $q^*_{\Theta'}$ controls all $K+1$ constraints and touches $q(\theta)$ at $\theta_{K+1}$.
\end{proof}

\end{document}